	\author{%
		Dominik Peters \\
		University of Oxford
	}
	\newtheorem{theorem}{Theorem}[section]
	\newtheorem{lemma}[theorem]{Lemma}
	\newtheorem{proposition}[theorem]{Proposition}
	\author{Dominik Peters}
	\affiliation{%
		\institution{University of Oxford}
		\city{Oxford} 
		\country{UK}}
	\email{dominik.peters@cs.ox.ac.uk}
	\keywords{computational social choice; multiwinner elections; proportionality; strategyproofness; computer-aided methods; SAT solving}  
	\thanks{Version v3, November 2024. \\ Original version published at AAMAS 2018. \\ Current affiliation: CNRS, LAMSADE, Université Paris Dauphine - PSL. \\ Email: dominik.peters@lamsade.dauphine.fr}
\renewcommand*{\le}{\leqslant}
\renewcommand*{\ge}{\geqslant}
\newcommand{\bigland}{\bigwedge} 
\newcommand{\biglor}{\bigvee}
\setlist[description]{leftmargin=2\parindent,labelindent=\parindent}
\begin{document}

\title{Proportionality and Strategyproofness in Multiwinner Elections}

\begin{abstract}  
Multiwinner voting rules can be used to select a fixed-size committee from a larger set of candidates. We consider approval-based committee rules, which allow voters to approve or disapprove candidates. In this setting, several voting rules such as Proportional Approval Voting (PAV) and Phragm\'en's rules have been shown to produce committees that are proportional, in the sense that they proportionally represent voters' preferences; all of these rules are strategically manipulable by voters. On the other hand, a generalisation of Approval Voting gives a non-proportional but strategyproof voting rule. We show that there is a fundamental tradeoff between these two properties: we prove that no multiwinner voting rule can simultaneously satisfy a weak form of proportionality (a weakening of justified representation) and a weak form of strategyproofness. Our impossibility is obtained using a formulation of the problem in propositional logic and applying SAT solvers; a human-readable version of the computer-generated proof is obtained by extracting a minimal unsatisfiable set (MUS). We also discuss several related axiomatic questions in the domain of committee elections.
\end{abstract}

\maketitle

\section*{Version Notes}

\indent\indent
[September 2020] This version is a slight revision of the paper that appeared in the conference proceedings. The conference version contained a gap in the proof of Lemma~\ref{lem:induction-m}. To fix this gap, it was necessary to introduce an additional axiom, weak efficiency, to the impossibility theorems; see Section~\ref{sec:efficiency}. I thank Boas Kluiving, Adriaan de Vries, Pepijn Vrijbergen for pointing out the error.

[November 2024] I removed a mistaken claim at the end of Section~\ref{sec:sp-axioms} that ``one can check that PAV cannot be manipulated by reporting a superset of one’s ballot''. PAV can be so manipulated.

\iflatexml
Version v3, November 2024. \\ Original version published at AAMAS 2018. \\ Current affiliation: CNRS, LAMSADE, Université Paris Dauphine - PSL. \\ Email: dominik.peters@lamsade.dauphine.fr
\fi

\section{Introduction}

The theory of multiwinner elections is concerned with designing and analysing procedures that, given preference information from a collection of voters, select a fixed-size \emph{committee} consisting of $k$ members, drawn from a larger set of $m$ candidates. Often, we will be interested in picking a \emph{representative} committee whose members together cover the diverse interests of the voters. We may also aim for this representation to be \emph{proportional}; for example, if a group of 20\% of the voters have similar interests, then about 20\% of the members of the committee should represent those voters' interests.

Historically, much work in mathematical social science has tried to formalise the latter type of proportionality requirement, in the form of finding solutions to the \emph{apportionment problem}, which arises in settings where voters express preferences over \emph{parties} which are comprised of many candidates \citep{BaYo82a}. 
More recently, theorists have focussed on cases where there are no parties, and preferences are expressed directly over the candidates \citep{FSST-trends}. The latter setting allows for applications in areas outside the political sphere, such as in group recommendation systems.

To formalise the requirement of proportionality in this party-free setting, it is convenient to consider the case where input preferences are given as \emph{approval ballots}: each voter reports a set of candidates that they find acceptable. Even for this simple setting, there is a rich variety of rules that exhibit different behaviour \citep{kil-handbook}, and this setting gives rise to a rich variety of axioms.

One natural way of selecting a committee of $k$ candidates when given approval ballots is to extend \emph{Approval Voting} (AV): for each of the $m$ candidates, count how many voters approve them (their \emph{approval score}), and then return the committee consisting of the $k$ candidates whose approval score is highest. Notably, this rule can produce committees that fail to represent large groups of voters. Consider, for example, an instance where $k=3$, and where 5 voters approve candidates $a$, $b$ and $c$, while 4 other voters approve only the candidate $d$. Then AV would select the committee $\{a,b,c\}$, leaving almost half of the electorate unrepresented. Intuitively, the latter group of 4 voters, consisting of more than a third of the electorate, should be represented by at least 1 of the 3 committee members.

\citet{ejr} introduce an axiom called \emph{justified representation} (JR) which formalises this intuition that a group of $n/k$ voters should not be left without any representation; a stronger version of this axiom called \emph{proportional justified representation} (PJR) has also been introduced and studied \citep{pjr17}. While AV fails these axioms, there are appealing rules which satisfy them. An example is \emph{Proportional Approval Voting} (PAV), first proposed by \citet{Thie95a}. The intuition behind this rule is that voters prefer committees which contain more of their approved candidates, but that there are decreasing marginal returns; specifically, let us presume that voters gain 1 `util' in committees that contain exactly 1 approved candidates, $1 + \frac12$ utils with 2 approved candidates, and in general $1 + \frac12 + \dots + \frac1r$ utils with $r$ approved candidates. PAV returns the committee that maximises utilitarian social welfare with this choice of utility function. PAV satisfies a strong form of justified representation \citep{ejr}.

When voters are strategic, PAV has the drawback that it can often be manipulated. Indeed, suppose a voter $i$ approves candidates $a$ and $b$. If $a$ is also approved by many other voters, PAV is likely to include $a$ in its selected committee anyway, but it might not include $b$ because voter $i$ is already happy enough due to the inclusion of $a$. However, if voter $i$ pretends not to approve $a$, then it may be utility-maximising for PAV to include both $a$ and $b$, so that $i$ successfully manipulated the election.%
\footnote{For a specific example, consider $P = (abc,abc,abc,abd,abd)$ for which $abc$ is the unique PAV-committee for $k=3$. If the last voter instead reports to approve $d$ only, then the unique PAV-committee is $abd$.}
Besides PAV, there exist several other proportional rules, such as rules proposed by Phragm\'en \citep{Janson16arxiv,brillphragmen}, but all of them can be manipulated using a similar strategy.

That voting rules are manipulable is very familiar to voting theorists; indeed the Gibbard--Satterthwaite theorem shows that for single-winner voting rules and strict preferences, \emph{every} non-trivial voting rule is manipulable.
However, in the approval-based multiwinner election setting, we have the tantalising example of Approval Voting (AV): this rule is strategyproof in the sense that voters cannot induce AV to return a committee including more approved candidates by misrepresenting their approval set. 
This raises the natural question of whether there exist committee rules that combine the benefits of AV and PAV: are there rules that are simultaneously proportional and strategyproof?

The contribution of this paper is to show that these two demands are incompatible. No approval-based multiwinner rule satisfies both requirements. This impossibility holds even for very weak versions of proportionality and of strategyproofness. The version of proportionality we use is much weaker than JR. It requires that if there is a group of at least $n/k$ voters who all approve a certain candidate $c$, and none of them approve any other candidate, and no other voters approve $c$, then $c$ should be part of the committee. Strategyproofness requires that a voter cannot manipulate the committee rule by dropping candidates from their approval ballot; a manipulation would be deemed successful if the voter ends up with a committee that contains additional approved candidates. In particular, our notion of strategyproofness only requires that the committee rule be robust to \emph{dropping} candidates; we do not require robustness against arbitrary manipulations that both add and remove candidates. Additionally, we impose a mild efficiency axiom requiring that the rule not elect candidates who are approved by none of the voters.

The impossibility theorem is obtained using computer-aided techniques that have recently found success in many areas of social choice theory \citep{GePe17a}. We encode the problem of finding a committee rule satisfying our axioms into propositional logic, and then use a SAT solver to check whether the formula is satisfiable. If the formula is unsatisfiable, this implies an impossibility, for a fixed number of voters, a fixed number of candidates, and a fixed $k$. We can then manually prove induction steps showing that the impossibility continues to hold for larger parameter values. Such techniques were first used by \citet{TaLi09a} to give alternative proofs of Arrow's and other classic impossibilities, and by \citet{GeEn11a} to find impossibilities for set extensions. \citet{BrGe15a} developed a method based on \emph{minimal unsatisfiable sets} that allows extracting a \emph{human-readable} proof of the base case impossibility. Thus, even though parts of the proofs in this paper are computer-generated, they are entirely human-checkable.

We begin our paper by describing several possible versions of strategyproofness and proportionality axioms. We then explain the computer-aided method for obtaining impossibility results in more detail, and present the proof of our main theorem. We end by discussing some extensions to this result, and contrast our result to a related impossibility theorem due to \citet{duddy}.

\section{Preliminaries}
\label{sec:prelims}

Let $C$ be a fixed finite set of $m$ \emph{candidates}, and let $N = \{1,\dots, n\}$ be a fixed finite set of $n$ \emph{voters}. An \emph{approval ballot} is a proper%
\footnote{Nothing hinges on the assumption that ballots are \emph{proper} subsets. Since we are mainly interested in impossibilities, this `domain restriction' slightly strengthens the results.}
subset $A_i$ of $C$, so that $\emptyset \neq A_i \subsetneq C$; let $\mathcal B$ denote the set of all ballots. For brevity, when writing ballots, we often omit braces and commas, so that the ballot $\{a,b\}$ is written $ab$. An (approval) \emph{profile} is a function $P : N \to \mathcal B$ assigning every voter an approval ballot. For brevity, we write a profile $P$ as an $n$-tuple, so that $P = (P(1),\dots,P(n))$. For example, in the profile $(ab,abc,d)$, voter $1$ approves candidates $a$ and $b$, voter $2$ approves $a$, $b$, and $c$, and voter $3$ approves $d$ only.

Let $k$ be a fixed integer with $1 \le k \le m$. A \emph{committee} is a subset of~$C$ of cardinality~$k$. We write $\mathcal C_k$ for the set of committees, and again for brevity, the committee $\{a,b\}$ is written as~$ab$.
An (approval-based) \emph{committee rule} is a function $f : \mathcal B^N \to \mathcal C_k$, assigning to each approval profile a unique winning committee. Note that this definition assumes that $f$ is \emph{resolute}, so that for every possible profile, it returns exactly one committee. In our proofs, we will implicitly restrict the domain of $f$ to profiles $P$ with $|\bigcup_{i\in N} P(i)| \ge k$, so that it is possible to fill the committee with candidates who are each approved by at least one voter. Since we are aiming for a negative result, this domain restriction only makes the result stronger.

Let us define two specific committee rules which will be useful examples throughout.

\emph{Approval Voting (AV)} is the rule that selects the $k$ candidates with highest approval score, that is, the $k$ candidates $c$ for which $|\{i\in N : c \in P(i) \}|$ is highest. Ties are broken lexicographically.

\emph{Proportional Approval Voting (PAV)} is the rule that returns the set $W\subseteq C$ with $|W| = k$ which maximises
\[ \sum_{i\in N} \left( 1 + \frac12 + \cdots + \frac1{|P(i) \cap W|} \right). \]
In case of ties, PAV returns the lexicographically first optimum.

Other important examples that we occasionally mention are Monroe's rule, Chamberlin--Courant, Phragm\'en's rules, and the sequential version of PAV. For definitions of these rules, we refer to the book chapter by \citet{FSST-trends}; they are not essential for following our technical results.

\section{Our Axioms}

In this section, we discuss the axioms that will be used in our impossibility result. These axioms have been chosen to be as weak as possible while still yielding an impossibility. This can make them sound technical and unnatural in isolation. To better motivate them, we discuss stronger versions that may have more natural appeal.

\subsection{Strategyproofness}
\label{sec:sp-axioms}
A voter can \emph{manipulate} a voting rule if, by submitting a non-truthful ballot, the voter can ensure that the voting rule returns an outcome that the voter strictly prefers to the outcome at the truthful profile. It is not obvious how to phrase this definition for committee rules, since we do not assume that voters have preferences over committees; we only have approval ballots over candidates.

One way to define manipulability in this context is to \emph{extend} the preference information we have to preferences over committees. This is the approach also typically taken when studying set-valued (irresolute) voting rules \citep{Tayl05a,Gard79a} or probabilistic voting rules \citep{Bran17a}. 
In our setting, there are several ways to extend approval ballots to preferences over committees, and hence several notions of strategyproofness. Our impossibility result uses the weakest notion.

For the formal definitions, let us introduce the notion of \emph{$i$-variants}.
For a voter $i\in N$, we say that a profile $P'$ is an $i$-variant of profile $P$ if $P$ and $P'$ differ only in the ballot of voter $i$, that is, if $P(j) = P'(j)$ for all $j \in N \setminus \{i\}$.
Thus, $P'$ is obtained after $i$ manipulated in some way, assuming that $P$ was the truthful profile.

One obvious way in which one committee can be better than another in a voter's view is if the former contains a larger number of approved candidates. Suppose at the truthful profile, we elect a committee of size $k=5$, of which voter $i$ approves 2 candidates. If $i$ can submit a non-truthful approval ballots which leads to the election of a committee with 3 candidates who are approved by $i$, then this manipulation would be successful in the cardinality sense.
\begin{description}
	\item[Cardinality-Strategyproofness] If $P'$ is an $i$-variant of $P$, then we do not have $|f(P') \cap P(i)| > |f(P) \cap P(i)|$.
\end{description}
One can check that AV with lexicographic tie-breaking satisfies cardinality-strategyproofness: it is neither advantageous to increase the approval score of a non-approved candidate, nor to decrease the approval score of an approved candidate.

Alternatively, we can interpret an approval ballot $A\in \mathcal B$ to say that the voter likes the candidates in $A$ (and would like them to in the committee), and that the voter dislikes the candidates not in $A$ (and would like them not to be in the committee). The voter's `utility' derived from committee $W$ would be the number of approved candidates in $W$ plus the number of non-approved candidates not in $W$. Interpreting approval ballots and committees as bit strings of length $m$, the voter thus desires the \emph{Hamming distance} between their ballot and the committee to be small. For two sets $A,B$, write $\mathcal H(A,B) = |A \mathbin{\Delta} B| = |(A\cup B) \setminus (A\cap B)|$.
\begin{description}
	\item[Hamming-Strategyproofness] If $P'$ is an $i$-variant of $P$, then we do not have $\mathcal H(f(P'), P(i)) < \mathcal H(f(P), P(i))$.
\end{description}
One can check that Hamming-strategyproofness and cardinality-strategyproofness are equivalent, because for a \emph{fixed} ballot $P(i)$, a committee is Hamming-closer to $P(i)$ than another if and only if the number of approved candidates is higher in the former.

The notions of strategyproofness described so far make sense if we subscribe to the interpretation of an approval ballot as a dichotomous preference, with the voter being completely indifferent between all approved candidates (or being unable to distinguish between them). In some settings, this is not a reasonable assumption. 

For example, suppose $i$ approves $\{a,b,c\}$; still it might be reasonable for $i$ to prefer a committee containing just $a$ to a committee containing both $b$ and $c$, maybe because $i$'s underlying preferences are such that $a$ is preferred to $b$ and $c$, even though all three are approved. However, $i$ should definitely prefer a committee that includes a strict superset of approved candidates. For example, a committee containing $a$ and $b$ should be better than a committee containing only $a$. This is the intuition behind superset-strategyproofness, which is a weaker notion than cardinality-strategyproofness.
\begin{description}
	\item[Superset-Strategyproofness] If $P'$ is an $i$-variant of $P$, then we do not have $f(P') \cap P(i) \supsetneq f(P) \cap P(i)$.
\end{description}

Interestingly, PAV and other proportional rules are often manipulable in a particularly simple fashion: a manipulator can obtain a better outcome by dropping popular candidates from their approval ballot. Formally, these rules can be manipulated even through reporting a proper subset of the truthful ballot. Our final and official notion of strategyproofness is a version of subset-strategyproofness which only requires the committee rule to resist manipulators who report a subset of the truthful ballot.

\begin{description}
	\item[Strategyproofness] If $P'$ is an $i$-variant of $P$ with $P'(i) \subset P(i)$, then we do not have $f(P') \cap P(i) \supsetneq f(P) \cap P(i)$.
\end{description}

Manipulating by reporting a subset of one's truthful ballot is sometimes known as \emph{Hylland free riding} \citep{hylland1992,schulze2004freeriding}: the manipulator free-rides on others approving a candidate, and can pretend to be worse off than they actually are. This can then induce the committee rule to add further candidates from their ballot to the committee. 
\citet{aziz2017fair} study a related notion of `excludable strategyproofness'
in the context of probabilistic voting rules.

\vspace{\baselineskip}
\vspace{\baselineskip}

\subsection{Proportionality}
\label{sec:prop-axioms}

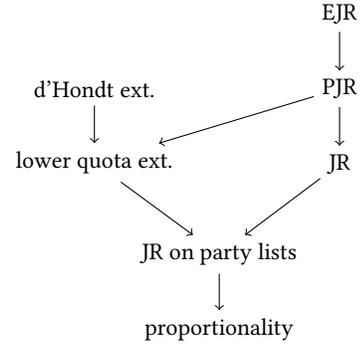
\begin{figure}
	\begin{tikzpicture}
		\node (EJR) {EJR};
		\node [below of=EJR] (PJR) {PJR};
		\node [below of=PJR] (JR) {JR};
		
		\node [left=2cm of PJR] (dH) {d'Hondt ext.};
		\node [below=0.5cm of dH] (lq) {lower quota ext.};
		
		\node [below left=0.7cm and 0.2cm of JR] (JRp) {JR on party lists};
		\node [below of=JRp] (prop) {proportionality};
		
		\draw[->]
		  (EJR) edge (PJR)
		  (PJR) edge (JR)
		  (JR) edge (JRp)
		  (JRp) edge (prop)
		  (PJR) edge (lq)
		  (dH) edge (lq)
		  (lq) edge (JRp);		
	\end{tikzpicture}
	\caption{Proportionality axioms and logical implications.}
	\label{fig:prop}
\end{figure}

We now discuss several axioms formalising the notion that the committee rule $f$ should be \emph{proportional}, in the sense of proportionally representing different factions of voters: for example, a `cohesive' group of $10\%$ of the voters should be represented by about $10\%$ of the members of the committee. The version of proportionality used in our impossibility is the last axiom we discuss. All other versions imply the one leading to impossibility; thus, this version is the weakest notion among the ones discussed here. Figure~\ref{fig:prop} shows a Hasse diagram of all discussed axioms. Approval Voting (AV) fails all of them, as can be checked for the example profile $P = (abc,abc,d)$ and $k = 3$, where AV returns $abc$.

We say that a profile $P$ is a \emph{party-list profile} if for all voters $i,j\in N$, either $P(i) = P(j)$, or $P(i) \cap P(j) = \emptyset$.
For example, $(ab,ab,cde,cde,\textit{f})$ is a party-list profile, but $(ab,c,c,abc)$ is not.
A party-list profile induces a partition of the set $C$ of candidates into disjoint \emph{parties}, so that each voter approves precisely the members of exactly one party.
The problem of finding a proportional committee given a party-list profile has been extensively studied as the problem of \emph{apportionment}.
Functions $g : \{ \text{party-list profiles}  \} \to \mathcal C_k$ are known as \emph{apportionment methods}; thus any committee rule induces an apportionment by restricting its domain to party-list profiles \citep{BLS17a}.
Many proportional apportionment methods have been introduced and defended over the last few centuries.
Given a committee rule $f$, one way to formalise the notion that $f$ is proportional is by requiring that the apportionment method induced by $f$ is proportional.

Given a party-list profile $P$, let us write $n_P(A) = |\{i \in N: P(i) = A\}|$ for the number of voters approving party $A$.
An apportionment method $g$ satisfies \emph{lower quota} if for every party-list profile $P$, each party $A$ in $P$ gets at least $\lfloor n_P(A) \cdot \frac k n \rfloor$ seats, that is, $|g(P) \cap A| \ge \lfloor n_P(A) \cdot \frac k n \rfloor$. 
This notion gives us our first proportionality axiom.
\begin{description}
	\item[Lower quota extension] The apportionment method induced by $f$ satisfies lower quota.
\end{description}
This axiom is satisfied by PAV, the sequential version of PAV, by Monroe's rule if $k$ divides $n$, and Phragm\'en's rule \citep{BLS17a}.

We can strengthen this axiom by imposing stronger conditions on the induced apportionment method. For example, the apportionment method induced by PAV and by Phragm\'en's rule  coincides with the \emph{d'Hondt method} (aka Jefferson method, see \citep{BLS17a} for a definition), so we could use the following axiom.
\begin{description}
	\item[d'Hondt extension] The apportionment method induced by $f$ is the d'Hondt method.
\end{description}

\citet{ejr} introduce a different approach of defining a proportionality axiom. Instead of considering only the case of party-list profiles, they impose conditions on \emph{all} profiles. The intuition behind their axioms is that sufficiently large groups of voters that have similar preferences `deserve' at least a certain number of representatives in the committee. They introduce the following axiom:
\begin{description}
	\item[Justified Representation (JR)] If $P$ is a profile, and $N' \subseteq N$ is a group with $|N'| \ge \frac nk$ and $\bigcap_{i\in N'} P(i) \neq \emptyset$, then $f(P) \cap \bigcup_{i\in N'} P(i) \neq \emptyset$.
\end{description}
Thus, JR requires that no group of at least $\frac nk$ voters for which there is at least one candidate $c \in C$ that they all approve can remain unrepresented: at least one of the voters in the group must approve at least one of the committee members.
This axiom is satisfied, for example, by PAV, Phragm\'en's rule, and Chamberlin-Courant \citep{ejr}, but not by the sequential version of PAV unless $k \le 5$ \citep{pjr17,ejr}.

One may think that JR is too weak: even if there is a large majority of voters who all report the same approval set, JR only requires that \emph{one} of their candidates be a member of the committee. But this group may deserve several representatives. The following strengthened version of JR is due to \citep{pjr17}. It requires that a large group of voters for which there are several candidates that they all approve should be represented by several committee members.
\begin{description}
	\item[Proportional Justified Representation (PJR)] For any profile $P$ and each $\ell = 1,\dots,k$, if $N' \subseteq N$ is a group with $|N'| \ge \ell\cdot\frac nk$ and $|\bigcap_{i\in N'} P(i)| \ge \ell$, then $|f(P) \cap \bigcup_{i\in N'} P(i)| \ge \ell$.
\end{description}
This axiom is also satisfied by PAV and Phragm\'en's rule \citep{pjr17,brillphragmen}. \citet{BLS17a} show that if a rule satisfies PJR, then it is also a lower quota extension. A yet stronger version of JR is \emph{EJR}, introduced by \citet{ejr}; EJR requires that there is at least one group member who has at least $\ell$ approved committee members.
\begin{description}
	\item[Extended Justified Representation (EJR)] For any profile $P$ and each $\ell = 1,\dots,k$, if $N' \subseteq N$ is a group with $|N'| \ge \ell\cdot\frac nk$ and $|\bigcap_{i\in N'} P(i)| \ge \ell$, then $|f(P) \cap P(i)| \ge \ell$ for some $i\in N'$.
\end{description}
This axiom is satisfied by PAV \citep{ejr}, but not by Phragm\'en's rule \citep{brillphragmen}.

The proportionality axiom we use in our impossibility combines features of the JR-style axioms with the apportionment-extension axioms. Consider the following axiom.
\begin{description}
	\item[JR on party lists] Suppose $P$ is a party-list profile, and some ballot $A \in \mathcal B$ appears at least $\frac n k$ times in $P$. Then $f(P) \cap A \neq \emptyset$.
\end{description}
This axiom only requires JR to hold for party-list profiles; thus, it only requires that we represent large-enough groups of voters who all report the exact same approval ballot \citep[see also][]{BKNS14a}.
As an example, this axiom requires that $f(ab, ab, cd, cd) \in \{ac, ad, bc, bd\}$, because the ballots $ab$ and $cd$ both appear at least $\frac nk = \frac 42 = 2$ times.

Our official proportionality axiom is still weaker, and only requires us to represent \emph{singleton} parties with large-enough support.
\begin{description}
	\item[Proportionality] Suppose $P$ is a party-list profile, and some singleton ballot $\{c\} \in \mathcal B$ appears at least $\frac n k$ times in $P$. Then $c \in f(P)$.
\end{description}
This axiom should be almost uncontroversial if we desire our committee rule to be proportional in any sense. A group of voters who all approve just a single candidate is certainly cohesive (there are \emph{no} internal disagreements), it is clear what it means to represent this group (add their approved candidate to the committee), and the group is uniquely identified (because no outside voters approve sets that intersect with the group's approval ballot).

Since our proportionality axiom only refers to the apportionment method induced by $f$, our impossibility states that no reasonable apportionment method admits an extension to the `open list' setting (where voters are not bound to a party) which is strategyproof.

A type of axiom related to proportionality are \emph{diversity} requirements. These typically require that as many voters as possible should have a representative in the committee, but they do not insist that groups of voters be proportionally represented \citep{elk-fal-sko-sli:c:multiwinner-rules,FSST-trends}. The Chamberlin--Courant rule \citep{ChCo83a} is an example of a rule selecting diverse committees. \citet{lac-sko:t:abc-approval-multiwinner} propose the following formulation of this requirement for the approval setting:
\begin{description}
	\item[Disjoint Diversity] Suppose $P$ is a party-list profile with at most $k$ different parties. Then $f(P)$ contains at least one member from each party.
\end{description}
Our main result (Theorem~\ref{thm:main}) also holds when replacing proportionality by disjoint diversity, since all profiles in its proof where proportionality is invoked feature at most $k$ different parties.

\subsection{Efficiency}
\label{sec:efficiency}

We will additionally impose a mild technical condition, which can be seen as an efficiency axiom.%
The axiom will only be used in one of the induction steps (Lemma~\ref{lem:induction-m}).

\begin{description}
	\item[Weak Efficiency] If $P$ is a profile with $|\bigcup_{i\in N} P(i)| \ge k$, and $c$ is a candidate who is approved by no voters, then $c \not\in f(P)$.
\end{description}

Thus, a rule satisfying weak efficiency should fill the committee with candidates who are approved by some voters, rather than electing candidates approved by no one. A similar axiom of the same name is used by \citet{lac-sko:t:abc-approval-multiwinner}. As we declared in Section~\ref{sec:prelims}, in our proofs we will always restrict attention to profiles $P$ with $|\bigcup_{i\in N} P(i)| \ge k$, so that weak efficiency applies to all relevant profiles.

\section{The Computer-Aided Approach}

To obtain our impossibility result, we have used the computer-aided technique developed by \citet{TaLi09a} and \citet{GeEn11a}. This approach is based on using a computer search (usually in form of a SAT solver) to establish the base case of an impossibility theorem, and then using (manually proved) induction steps to extend the theorem to bigger values of $n$ and $m$. \citet{TaLi09a} used this technique to give proofs of Arrow's and other classic impossibility theorems in social choice, and \citet{GeEn11a} used it to find new impossibilities in the area of set extensions. A paper by \citet{BrGe15a} used this approach to prove an impossibility about strategyproof \emph{tournament solutions}; an important technical contribution of their paper was the use of \emph{minimal unsatisfiable sets} to produce human-readable proofs of the base case. This technology was also used to prove new impossibilities about the no-show paradox \citep{BGP16c}, about half-way monotonicity \citep{peters2017halfway}, and about probabilistic voting rules \citep{BBG15a}. A recent book chapter by \citet{GePe17a} provides a survey of these results.

The ``base case'' of an impossibility theorem proves that no voting rule exists satisfying a certain collection of axioms for a \emph{fixed} number of voters and alternatives, and (in our case) a fixed committee size $k$. Fixing these numbers, there are only finitely many possible rules, and we can in principle iterate through all $\binom{m}{k}^{\raisebox{-2pt}{$\scriptstyle2^{mn}$}}$possibilities and check whether any satisfies our axioms. However, this search space quickly grows out of reach of a na\"ive search. 

In many cases, we can specify our axiomatic requirements in propositional logic, and use a SAT solver to check for the existence of a suitable voting rule. Due to recent dramatic improvements in solving times of SAT solvers, this approach often makes this search feasible, even for moderately large values of $n$ and $m$ \citep{BGP16c}.

How can we encode our problem of finding a proportional and strategyproof committee rule into propositional logic? This turns out to be straightforward. Our formula will be specified so that every satisfying assignment explicitly encodes a committee rule satisfying the axioms. We generate a list of all $(2^m)^n$ possible approval profiles, and for each profile $P$ and each committee $W$, we introduce a propositional variable $x_{P,W}$ with the intended interpretation that
\[ x_{P,W} \text{ is true} \iff f(P) = W. \]
We then add clauses that ensure that any satisfying assignment encodes a function (so that $f(P)$ takes exactly one value), we add clauses that ensure that only proportional committees may be returned, and we iterate through all profiles $P$ and all $i$-variants of it, adding clauses to ensure that no successful manipulations are possible. The details are shown in Algorithm~\ref{alg:one}; the formulation we use is slightly more efficient by never introducing the variable $x_{P,W}$ in case that the committee $W$ is not proportional in profile $P$.

Now, given numbers $n$, $m$, and $k$, Algorithm~\ref{alg:one} encodes our problem, passes the resulting propositional formula to a SAT solver \citep{Bier13a,AuSi09a} and reports whether the formula was satisfiable. If it is satisfiable, then we know that there exists a propotional and strategyproof committee rule for these parameter values, and the SAT solver will return an explicit example of such a rule in form of a look-up table. If the formula is unsatisfiable (like in our case), then we have an impossibility for these parameter values.

A remaining challenge is to extend this impossibility result to other parameter values, which is usually done by proving \emph{induction steps}; however, this is not always straightforward to do, and in some cases, impossibilities do \emph{not} hold for all larger parameter values \citep[e.g.,][]{peters2017halfway}. In many cases, the induction step on $n$ is most-difficult to establish. We also run into trouble proving this step, and our impossibility is only proved for the case where $n$ is a multiple of $k$.

Another challenge is to find a proof of the obtained impossibility, and preferably one that can easily be checked by a human. Many SAT solvers can be configured to output a proof trace which contains all steps used to deduce that the formula is unsatisfiable; but these proofs can become very large. Recent examples are SAT-generated proofs of a special case of the Erd\H{o}s Discrepancy Conjecture \citep{KoLi14a} which takes 13GB, and of a solution to the Boolean Pythagorean Triples Problem \citep{HKM16a} which takes 200TB. Clearly, humans cannot check the correctness of these proofs.

We use the method introduced by \citet{BrGe15a} via minimal unsatisfiable sets (MUS). An MUS of an unsatisfiable propositional formula in conjunctive normal form is a subset of its clauses which is already unsatisfiable, but minimally so: removing any further clause leaves a satisfiable formula. Thus, every clause in an MUS corresponds to a `proof ingredient' which cannot be skipped. MUSes of formulas derived from voting problems like ours are often very small, only referring to a few dozen profiles. This can be explained through the `local' nature of the axioms used: proportionality constrains the behaviour of the committee rule at a single profile, and strategyproofness links the behaviour at two profiles.

MUSes can be found using MUS extractors, which have become reasonably efficient. We used MUSer2 \citep{BeMa12a} and MARCO \citep{LPMM15a}. Once one finds a small MUS, it can then be manually inspected to understand how the clauses in the MUS fit together. More details of this process are described in the book chapter by \citet{GePe17a}.

\begin{algorithm}[t]
	\DontPrintSemicolon
	\SetAlgoNoLine
	\KwIn{Set $C$ of candidates, set $N$ of voters, committee size $k$.}
	\KwOut{Does a proportional and strategyproof committee rule exist?}
	\For{each profile $P \in \mathcal B^N$}{
		\eIf{$P$ is a party-list profile}{
			allowed$[P] \gets \{ C \in \mathcal C_k : C \text{ provides JR to singleton parties}  \}$\;
		}{
			allowed$[P] \gets \mathcal C_k$\;
		}
		\For{each committee $C \in \textup{allowed}[P]$}{
			introduce propositional variable $x_{P,C}$\;
		}
	}
	\For{each profile $P \in \mathcal B^N$}{
		add clause $\biglor_{C\in \textup{allowed}[P]} x_{P,C}$\;
		add clauses $\bigland_{C\neq C'\in \textup{allowed}[P]} (\lnot x_{P,C} \lor \lnot x_{P,C'})$\;
		\For{each voter $i\in N$}{
			\For{each $i$-variant $P'$ of $P$ with $P'(i) \subseteq P(i)$}{
				\For{each $C \in \textup{allowed}[P]$ and $C' \in \textup{allowed}[P']$}{
					\If{$C' \cap P(i) \supsetneq C \cap P(i)$}{
							add clause $(\lnot x_{P,C} \lor \lnot x_{P', C'})$\;
					}
				}
			}
		}
	}
	pass formula to SAT solver\;
	\Return{\textup{whether formula is satisfiable}}
	\caption{Encode Problem for SAT Solving}
	\label{alg:one}
\end{algorithm}

\section{The Impossibility Theorem}

We are now in a position to state our main result, that there are no proportional and strategyproof committee rules.

\begin{theorem}
	\label{thm:main}
	Suppose $k \ge 3$, the number $n$ of voters is divisible by $k$, and $m \ge k+1$. Then there exists no approval-based committee rule which satisfies weak efficiency, proportionality and strategyproofness.
\end{theorem}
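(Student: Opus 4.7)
The plan is to follow the SAT-plus-induction recipe laid out earlier in the paper. For a fixed $k \ge 3$ I would first close off a small base case via Algorithm~\ref{alg:one}; the natural target is $n = 2k$, $m = k+1$ (so for $k=3$, an instance with $n=6$, $m=4$), which is well within SAT reach. Running the encoding yields an unsatisfiable formula, and I would feed it to an MUS extractor (MUSer2 or MARCO) to obtain a small clause set of ``proof ingredients'': a handful of profiles connected by strategyproofness edges, with a few proportionality constraints pinning certain committees on party-list profiles and weak efficiency ruling out candidates approved by no one. Reading the MUS in the order suggested by the strategyproofness chain should yield a human-checkable derivation of the contradiction.

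Induction on $m$ is the easier step and is essentially Lemma~\ref{lem:induction-m}. Suppose a rule $f$ satisfying all three axioms exists on $m+1$ candidates $C \cup \{d\}$. I would define $f'$ on profiles $P$ over $C$ by viewing them as profiles over $C \cup \{d\}$ in which $d$ is approved by nobody, and setting $f'(P) = f(P)$. By weak efficiency together with the standing domain restriction $|\bigcup_i P(i)| \ge k$, we get $d \notin f(P)$, so $f'(P)$ is a genuine $k$-committee in $C$. Proportionality transfers because party-list profiles over $C$ are still party-list over $C\cup\{d\}$; strategyproofness transfers because $i$-variants $P'$ of $P$ over $C$ with $P'(i) \subset P(i)$ are also such $i$-variants over $C \cup \{d\}$. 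This is the step where weak efficiency earns its keep: without it, nothing stops $f$ from electing the unapproved candidate $d$ and breaking the reduction.

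The induction on $n$ is the main obstacle, and is why the theorem is restricted to $n$ divisible by $k$. The goal is to pass from an impossibility at $n$ voters to one at $n+k$ voters by adding $k$ voters in such a way that certain committee seats become forced, reducing the new instance to the old one. The natural attempt is to invoke the $m$-induction step to introduce a fresh candidate $c^*$ and then add $k$ new voters all submitting the singleton ballot $\{c^*\}$: proportionality then forces $c^* \in f(P)$ irrespective of the other voters' ballots, so the remaining $k-1$ seats should behave like a committee chosen by a rule on the smaller instance. The delicate part is verifying that strategyproofness of the putative big-instance rule really does transfer to strategyproofness of the small-instance one, without introducing manipulation paths through the $k$ new voters or corrupting proportionality on party-list profiles; this argument only closes cleanly when $n+k$ is again a multiple of $k$, i.e., when $n$ itself is a multiple of $k$. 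Quantification over $k$ would be handled uniformly, since both induction steps are parametric in $k$ and the base case can be re-run per value of $k$ (or, more elegantly, lifted from $k=3$ by padding with additional singleton parties that proportionality pins down automatically).
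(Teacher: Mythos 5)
Your overall architecture (SAT-generated base case plus hand-proved induction steps, with weak efficiency powering the reduction on $m$) matches the paper, and your $m$-induction is essentially Lemma~\ref{lem:induction-m} verbatim. The gap is in the step you yourself flag as the main obstacle: the induction on $n$. Your proposed construction --- pass from $n$ to $n+k$ voters by adding $k$ voters who all approve a fresh candidate $c^*$ and arguing that $c^*$ is forced into the committee --- fails for three reasons. First, proportionality only constrains party-list profiles, so it does not force $c^* \in f(P)$ ``irrespective of the other voters' ballots''; extending this guarantee to arbitrary profiles is exactly the content of Lemma~\ref{lem:singleton-approvers}, which needs a separate strategyproofness chain and the restriction $m=k+1$. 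Second, even on party-list profiles the quota at $n+k$ voters is $(n+k)/k$ while your new block has only $k$ voters, so the block is large enough only when $k^2 \ge n+k$; the induction stalls once $n$ exceeds $k(k-1)$. Third, pinning $c^*$ to a seat leaves $k-1$ free seats, so what you obtain is a rule for committee size $k-1$, not a rule for the same $k$ on fewer voters --- the construction moves the wrong parameter.

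The paper avoids all of this by running the $n$-induction \emph{downwards} via cloning: from a rule $f_{qk}$ for $qk$ voters it defines $f_k(P) = f_{qk}(qP)$, which preserves the quota ratio exactly (this is where divisibility of $n$ by $k$ is genuinely used, not the triviality that $n+k$ is again a multiple of $k$) and transfers strategyproofness by letting the $q$ clones of the manipulator deviate one at a time, exploiting that for $m=k+1$ any two committees are comparable. The ``pad with a singleton party'' idea you sketch at the end is indeed in the paper, but it is used to reduce the committee size (Lemma~\ref{lem:induction-k}: one extra voter, one extra candidate, down from $k+1$ to $k$), and it leans on Lemma~\ref{lem:singleton-approvers} to guarantee the padded candidate is always elected. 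You would need to replace your $n$-step with something like the cloning argument for the proof to close.
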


The assumption that $k\ge 3$ is critical; we discuss the cases $k = 1$ and $k=2$ separately in Section~\ref{sec:k-2}. The assumption that $n$ be divisible by $k$ also appears to be critical; the SAT solver indicates positive results when $n$ is not a multiple of $k$. However, we do not know short descriptions of these rules, and it is possible (likely?) that impossibility holds for large $n$ and $m$. Using stronger proportionality axioms, the result holds for all sufficiently large $n$; see Section~\ref{sec:droop}.

The proof of this impossibility was found with the help of computers, but it was significantly simplified manually. One convenient first step is to establish the following simple lemma. It uses strategyproofness to extend the applicability of proportionality to certain profiles that are not party-list profiles.

\begin{lemma}
	\label{lem:singleton-approvers}
	Let $m = k+1$. Let $f$ be strategyproof and proportional. Suppose that $P$ is a profile in which some singleton ballot $\{c\}$ appears at least $\frac nk$ times, but in which no other voter approves $c$. Then $c\in f(P)$.
\end{lemma}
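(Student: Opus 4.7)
The plan is to construct a party-list profile where proportionality directly yields $c$ in the committee, and then transform that profile back into $P$ using strategyproofness while preserving the membership of $c$. Let $N_c = \{ i \in N : P(i) = \{c\} \}$, so that $|N_c| \geq n/k$ by hypothesis, and let $M = N \setminus N_c$; by the assumption that no voter outside $N_c$ approves $c$, every $i \in M$ has $P(i) \subseteq C \setminus \{c\}$. Define $P^*$ by keeping $P^*(i) = \{c\}$ for $i \in N_c$ and setting $P^*(i) = C \setminus \{c\}$ for every $i \in M$. Then $P^*$ is a party-list profile whose two parties are $\{c\}$ and $C \setminus \{c\}$, and the singleton ballot $\{c\}$ appears $|N_c| \geq n/k$ times, so proportionality yields $c \in f(P^*)$.

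The next step is to enumerate the voters of $M$ as $j_1,\dots,j_{|M|}$ and walk from $P^*$ to $P$ one voter at a time, switching $j_t$'s ballot from $C \setminus \{c\}$ to the truthful $P(j_t) \subseteq C \setminus \{c\}$. Let $Q_t$ denote the profile after the first $t$ such switches, so $Q_0 = P^*$ and $Q_{|M|} = P$. Each transition $Q_t \to Q_{t+1}$ is a ballot-shrinkage by voter $j_{t+1}$, so strategyproofness (applied with $Q_t$ as the truthful profile and $j_{t+1}$'s truthful ballot $C \setminus \{c\}$) forbids $f(Q_{t+1}) \cap (C \setminus \{c\}) \supsetneq f(Q_t) \cap (C \setminus \{c\})$. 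Assuming inductively that $c \in f(Q_t)$, the left-hand intersection has size $k-1$. Here the hypothesis $m = k+1$ becomes decisive: if $c \notin f(Q_{t+1})$, then $f(Q_{t+1}) \subseteq C \setminus \{c\}$, and since both sets have cardinality $k$, in fact $f(Q_{t+1}) = C \setminus \{c\}$, whose intersection with $C \setminus \{c\}$ has size $k$ and strictly contains the $(k-1)$-element set $f(Q_t) \setminus \{c\}$, violating strategyproofness. Hence $c \in f(Q_{t+1})$, and iterating gives $c \in f(P)$.

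The main subtlety to watch out for is the direction of the transformation: strategyproofness as stated only constrains manipulations in which a voter reports a strict subset of their truthful ballot, so the walk must proceed from $P^*$ toward $P$ (ballots shrinking) rather than the reverse. A minor side-condition to verify is that every intermediate $Q_t$ lies in the implicit domain $|\bigcup_i Q_t(i)| \geq k$, which is immediate because either some $M$-voter still reports the full ballot $C \setminus \{c\}$ (in which case the union is all of $C$) or we have reached $P$ itself. Apart from these bookkeeping points, the proof is essentially a single application of proportionality followed by a one-step use of strategyproofness, with the $m = k+1$ hypothesis providing all the leverage by forcing the unique $c$-free $k$-committee to be $C \setminus \{c\}$.
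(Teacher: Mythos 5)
Your proof is correct and follows essentially the same route as the paper: construct the party-list profile by replacing every non-$\{c\}$ ballot with $C\setminus\{c\}$, apply proportionality there, and then walk back to $P$ one voter at a time, using strategyproofness and the fact that for $m=k+1$ the only $c$-free committee is $C\setminus\{c\}$. You merely spell out in more detail the step the paper summarises as ``the output committee cannot be $C\setminus\{c\}$ at each step.''
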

\begin{proof}
	Let $P'$ be the profile defined by
	\[ P'(i) = \begin{cases}
	\{c\} & \text{if } P(i) = \{c\}, \\
	C \setminus \{c\} & \text{otherwise}.
	\end{cases}  \]
	Then $P'$ is a party-list profile, and by proportionality, $c\in f(P')$. Thus, $f(P') \neq C \setminus \{c\}$. Now, step by step, we let each non-$\{c\}$ voter $j$ in $P'$ change back their vote to $P(j)$. By strategyproofness, at each step the output committee cannot be $C \setminus \{c\}$. In particular, at the last step, we have $f(P) \neq C \setminus \{c\}$. Thus, $c\in f(P)$, as required.
\end{proof}

\subsection{Base case}
\label{sec:base-case}

The first major step in the proof is to establish the impossibility in the case that $k = 3$, $n = 3$, and $m = 4$. 
The proof of this base case is by contradiction, assuming there exists some $f$ satisfying the axioms. We start by considering the profile $P_1 = (ab,c,d)$, and break some symmetries. (This is a useful strategy to obtain smaller and better-behaved MUSes.) Using proportionality, symmetry-breaking allows us to assume that $f(P_1) = acd$. The proof then goes through seven steps, applying the same reasoning each time. In each step, we use strategyproofness to infer the values of $f$ at certain profiles $P_2, \dots, P_7$.
Finally, we find that strategyproofness implies that $f(P_1) \neq acd$, which contradicts our initial assumption about $f(P_1)$. 

\begin{lemma}
	\label{lem:base-case}
	There is no committee rule that satisfies proportionality and strategyproofness for $k=3$, $n=3$, and $m=4$.
\end{lemma}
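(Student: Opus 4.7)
The plan is to argue by contradiction, assuming there is a rule $f$ satisfying weak efficiency, proportionality and (subset-)strategyproofness with $n=k=3$ and $m=4$. Here committees are exactly the four $3$-subsets $abc,abd,acd,bcd$, and $n/k=1$, so proportionality and Lemma~\ref{lem:singleton-approvers} (which applies since $m=k+1$) both trigger as soon as a voter's ballot is a singleton whose candidate has no other approvers. I would use this extensively as a ``pinning'' tool.

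The natural anchor is the profile $P_1=(ab,c,d)$. Voter 2 is the only approver of $c$ and reports $\{c\}$, and likewise voter 3 is the only approver of $d$; so Lemma~\ref{lem:singleton-approvers} immediately forces $\{c,d\}\subseteq f(P_1)$, leaving $f(P_1)\in\{acd,bcd\}$. The candidates $a$ and $b$ are completely symmetric in $P_1$, so by relabelling I may assume $f(P_1)=acd$, and derive a contradiction from this single assumption.

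From there the plan is to build a short chain of profiles $P_2,\dots,P_7$, each an $i$-variant of the previous via a single voter adding or dropping one candidate, and to pin down $f(P_j)$ step by step. At each step I would combine two sources of constraint: (i)~strategyproofness, which rules out any $f(P_j)$ whose intersection with the manipulator's true ballot is a strict superset of the analogous intersection at the neighbouring profile (in whichever direction has the required subset relation between ballots); and (ii)~proportionality/Lemma~\ref{lem:singleton-approvers}, which whenever a singleton ballot appears with no competing approval forces that candidate into $f(P_j)$. Since there are only four committees, each step is a trivial case check. The chain is designed so that the final step forces a constraint back on $P_1$ that rules out $f(P_1)=acd$, completing the contradiction.

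The main obstacle is discovering the chain $P_2,\dots,P_7$ itself; individually the deductions are routine, but the combinatorial ordering that makes the loop close is not at all obvious by hand, which is precisely why the authors extract it from a minimal unsatisfiable set produced by a SAT solver. Lacking that, the natural human approach would be to first use proportionality plus Lemma~\ref{lem:singleton-approvers} to pin $f$ at all ``nearly-singleton'' profiles (which are many, since $n/k=1$), then propagate strategyproofness along one-voter edges in the profile graph, hoping a forced contradiction emerges within a small radius of $P_1$. The symmetric case $f(P_1)=bcd$ is either handled by the same chain with $a\leftrightarrow b$ swapped, or absorbed into the initial WLOG step.
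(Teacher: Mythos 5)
Your setup is exactly the paper's: anchor at $P_1=(ab,c,d)$, use proportionality to force $f(P_1)\in\{acd,bcd\}$, break the $a\leftrightarrow b$ symmetry to assume $f(P_1)=acd$, and then propagate constraints along one-voter edges using subset-strategyproofness together with Lemma~\ref{lem:singleton-approvers} as a pinning tool. But the proposal stops short of the actual proof: you explicitly decline to exhibit the chain $P_2,\dots,P_7$, and that chain \emph{is} the content of the lemma. Everything else (the anchor, the WLOG, the general propagation mechanism) is routine; the only nontrivial claim is that a closed cycle of such deductions exists and returns to $P_1$ with a contradiction. ``A chain of this kind should exist and a SAT solver found one'' is not a proof that one exists, so as written this is a plan rather than an argument.

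For reference, the missing object is the following fourteen-profile cycle, alternating between ``full'' profiles (two singleton ballots, pinned by proportionality up to one free slot) and ``half-step'' profiles (one singleton ballot, pinned by Lemma~\ref{lem:singleton-approvers} plus strategyproofness toward the previous profile):
$P_1=(ab,c,d)$, $P_{1.5}=(ab,ac,d)$, $P_2=(b,ac,d)$, $P_{2.5}=(b,ac,cd)$, $P_3=(b,a,cd)$, $P_{3.5}=(b,ad,cd)$, $P_4=(b,ad,c)$, $P_{4.5}=(b,ad,ac)$, $P_5=(b,d,ac)$, $P_{5.5}=(b,cd,ac)$, $P_6=(b,cd,a)$, $P_{6.5}=(b,cd,ad)$, $P_7=(b,c,ad)$, $P_{7.5}=(ab,c,ad)$, and back to $P_1$. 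Each consecutive pair differs in one voter's ballot by a single candidate, oriented so that the subset-manipulation direction applies; the forced values cycle through $acd, bcd, abd, abc, bcd, abd, abc$ at $P_1,\dots,P_7$, and the final step at $P_{7.5}$ contradicts $f(P_1)=acd$. Two smaller remarks: your suggestion to ``pin $f$ at all nearly-singleton profiles first and then search outward'' is a reasonable discovery heuristic but again defers the existence of the cycle rather than establishing it; and you should not add weak efficiency to the contradiction hypothesis here --- the base case neither needs it nor assumes it, and including it would formally weaken the statement being proved.
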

\begin{proof}
	Suppose for a contradiction that such a committee rule $f$ existed.
	Consider the profile $P_1 = (ab,c,d)$. By proportionality, we have $c\in f(P_1)$ and $d \in f(P_1)$. Thus, we have $f(P_1) \in \{ acd, bcd \}$. By relabelling the alternatives, we may assume without loss of generality that $f(P_1) = acd$.
	
	Consider $P_{1.5} = (ab,ac,d)$. By Lemma~\ref{lem:singleton-approvers}, $d \in f(P_{1.5})$. Thus, $f(P_{1.5}) = acd$, or else voter 2 can manipulate towards $P_1$.
	
	Consider $P_2 = (b,ac,d)$. By proportionality, $f(P_2) \in \{abd,bcd\}$. If we had $f(P_2) = abd$, then voter 1 in $P_{1.5}$ could manipulate towards $P_2$. Hence $f(P_2) = bcd$.
	
	Consider $P_{2.5} = (b,ac,cd)$. By Lemma~\ref{lem:singleton-approvers}, $b \in f(P_{2.5})$. Thus, $f(P_{2.5}) = bcd$, or else voter 3 can manipulate towards $P_2$.
	
	Consider $P_3 = (b,a,cd)$. By proportionality, $f(P_3) \in \{abc,abd\}$. If we had $f(P_3) = abc$, then voter 2 in $P_{2.5}$ could manipulate towards $P_3$. Hence $f(P_3) = abd$.
	
	Consider $P_{3.5} = (b,ad,cd)$. By Lemma~\ref{lem:singleton-approvers}, $b \in f(P_{3.5})$. Thus, $f(P_{3.5}) = abd$, or else voter 2 can manipulate towards $P_3$.
	
	Consider $P_4 = (b,ad,c)$. By proportionality, $f(P_4) \in \{abc,bcd\}$. If we had $f(P_4) = bcd$, then voter 3 in $P_{3.5}$ could manipulate towards $P_4$. Hence $f(P_4) = abc$.
	
	Consider $P_{4.5} = (b,ad,ac)$. By Lemma~\ref{lem:singleton-approvers}, $b \in f(P_{4.5})$. Thus, $f(P_{4.5}) = abc$, or else voter 3 can manipulate towards $P_4$.
	
	Consider $P_5 = (b,d,ac)$. By proportionality, $f(P_5) \in \{abd,bcd\}$. If we had $f(P_5) = abd$, then voter 2 in $P_{4.5}$ could manipulate towards $P_5$. Hence $f(P_5) = bcd$.
	
	Consider $P_{5.5} = (b,cd,ac)$. By Lemma~\ref{lem:singleton-approvers}, $b \in f(P_{5.5})$. Thus, $f(P_{5.5}) = bcd$, or else voter 2 can manipulate towards $P_5$.
	
	Consider $P_6 = (b,cd,a)$. By proportionality, $f(P_6) \in \{abc,abd\}$. If we had $f(P_6) = abc$, then voter 3 in $P_{5.5}$ could manipulate towards $P_6$. Hence $f(P_6) = abd$.
	
	Consider $P_{6.5} = (b,cd,ad)$. By Lemma~\ref{lem:singleton-approvers}, $b \in f(P_{6.5})$. Thus, $f(P_{6.5}) = abd$, or else voter 3 can manipulate towards $P_6$.
	
	Consider $P_7 = (b,c,ad)$. By proportionality, $f(P_7) \in \{abc,bcd\}$. If we had $f(P_7) = bcd$, then voter 2 in $P_{6.5}$ could manipulate towards $P_7$. Hence $f(P_7) = abc$.
	
	Finally, consider $P_{7.5} = (ab,c,ad)$. By Lemma~\ref{lem:singleton-approvers}, $c \in f(P_{7.5})$. Thus, $f(P_{7.5}) = abc$, or else voter 1 can manipulate towards $P_7$. But then voter 3 can manipulate towards $P_1 = (ab,c,d)$, because by our initial assumption, we have $f(P_1) = acd$. Contradiction.
\end{proof}

\subsection{Induction steps}
\label{sec:induction-step}

We now extend the base case to larger parameter values, by proving induction steps. The proofs all take the same form: Assuming the existence of a committee rule satisfying the axioms for large parameter values, we construct a rule for smaller values, and show that the smaller rule inherits the axiomatic properties of the larger rule. This is done, variously, by introducing dummy voters, by introducing dummy alternatives, and by copying voters. 

Our first induction step reduces the number of voters. The underlying construction works by copying voters, and using the `homogeneity' of the axioms of proportionality and strategyproofness. For the latter axiom, we use the fact that in the case $m = k+1$, the preference extension of approval ballots to committees is \emph{complete}, in that any two committees are comparable.

\begin{lemma}
	\label{lem:induction-n}
	Suppose $k\ge 2$ and $m = k+1$, and let $q \ge 1$ be an integer. If there exists a proportional and strategyproof committee rule for $q\cdot k$ voters, then there also exists such a rule for $k$ voters.
\end{lemma}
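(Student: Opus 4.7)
My plan is to construct a rule $g$ for $k$ voters from $f$ by replication: given a profile $P$ on $k$ voters, let $P^{\ast q}$ denote the profile on $qk$ voters in which each original ballot is repeated $q$ times, and set $g(P) := f(P^{\ast q})$. This is well-defined on the relevant domain, since $|\bigcup_i P^{\ast q}(i)| = |\bigcup_i P(i)|$. I would then check weak efficiency (the sets of candidates approved by nobody coincide in $P$ and $P^{\ast q}$, so $f$'s axiom transfers verbatim) and proportionality: a party-list profile $P$ on $k$ voters containing the singleton $\{c\}$ at least once yields a party-list profile $P^{\ast q}$ containing $\{c\}$ at least $q = qk/k$ times, so proportionality of $f$ forces $c \in g(P)$.

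The main obstacle will be strategyproofness, since a single deviation by a voter in the $k$-voter world corresponds to $q$ simultaneous deviations in the $qk$-voter world, and subset-strategyproofness is not obviously preserved under iteration. To bridge this, I would exploit the hypothesis $m = k+1$: every committee has the form $C \setminus \{x\}$ for some $x \in C$, so for any ballot $A$ and committee $W$ one has $|W \cap A| \in \{|A|-1, |A|\}$, and consequently, for any two committees $W, W'$,
\[
|W \cap A| > |W' \cap A| \iff W \cap A \supsetneq W' \cap A.
\]
The reverse direction is immediate; the forward direction holds because the larger intersection must equal $A$ itself, while the smaller equals $A \setminus \{x\}$ for some $x \in A$. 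In other words, at $m = k+1$ subset-strategyproofness is equivalent to its cardinality form: dropping candidates from one's ballot weakly decreases the number of approved committee members.

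With this equivalence in hand, strategyproofness of $g$ follows by a chain argument. Given an $i$-variant $P'$ of $P$ with $P'(i) \subsetneq P(i)$, consider the sequence $Q_0 := P^{\ast q}, Q_1, \dots, Q_q := (P')^{\ast q}$, where $Q_{j+1}$ is obtained from $Q_j$ by changing the $(j{+}1)$-st copy of voter $i$'s ballot from $P(i)$ to $P'(i)$. Each step is a subset deviation by a single voter whose truthful ballot is $P(i)$, so the cardinality form of strategyproofness of $f$ gives $|f(Q_{j+1}) \cap P(i)| \le |f(Q_j) \cap P(i)|$. Note that every intermediate $Q_j$ lies in the domain of $f$, since replacing $P(i)$ by a subset $P'(i)$ does not shrink the union below $|\bigcup P'(i)| \ge k$. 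Chaining across $q$ steps and translating back via the equivalence yields $g(P') \cap P(i) \not\supsetneq g(P) \cap P(i)$, as required. The cardinality reformulation is the sole place where $m = k+1$ is used; without it, the intersections along the chain could move incomparably and the iteration would collapse.
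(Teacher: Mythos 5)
Your proposal is correct and takes essentially the same route as the paper: the same replication construction $g(P) := f(P^{\ast q})$, the same chain of $q$ single-voter subset deviations interpolating between $P^{\ast q}$ and $(P')^{\ast q}$, and the same use of $m=k+1$ to make committee comparisons complete so that the non-increase of $|f(Q_j)\cap P(i)|$ can be chained. Your explicit equivalence between the superset and cardinality orderings at $m=k+1$ is just a cleaner statement of the paper's observation that no single step can ``exchange a non-approved candidate for an approved one.''
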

\begin{proof}
	For convenience, we write profiles as lists. Given a profile $P$, we write $qP$ for the profile obtained by concatenating $q$ copies of $P$. Let $f_{qk}$ be the rule for $q\cdot k$ voters. We define the rule $f_k$ for $k$ voters as follows:
	\[ f_k(P) = f_{qk}(qP) \quad \text{for all profiles $P \in \mathcal B^k$.}  \]
	
	\emph{Proportionality.} Suppose $P \in \mathcal B^k$ is a party-list profile in which at least $\frac nk = \frac kk = 1$ voters approve $\{c\}$. Then $qP$ is a party-list profile in which at least $q \cdot \frac nk = \frac{qn}k = q$ voters approve $\{c\}$. Since $f_{qk}$ is proportional, $c\in f_{qk}(qP) = f_k(P)$.
	
	\emph{Strategyproofness.} Suppose for a contradiction that $f_k$ is not strategyproof, so that there is $P$ and an $i$-variant $P'$ with $f_k(P') \cap P(i) \supsetneq f_k(P) \cap P(i)$. Because $m = k+1$, the committees $f_k(P')$ and $f_k(P)$ must differ in exactly 1 candidate. Since the manipulation was successful, $f_k(P')$ must be obtained by replacing a non-approved candidate in $f_k(P)$ by an approved one, say $f_k(P') = f_k(P) \cup \{c\} \setminus \{d\}$ with $c \in P(i) \not\ni d$. Now consider $f_{qk}(qP)$, and step-by-step let each of the $q$ copies of $P(i)$ in $qP$ manipulate from $P(i)$ to $P'(i)$ obtaining $qP'$ in the last step. Because $f_{qk}$ is strategyproof, at each step of this process $f_{qk}$ cannot have exchanged a non-approved candidate by an approved candidate according to $P(i)$. This contradicts that $f_k(P') = f_k(P) \cup \{c\} \setminus \{d\}$.
\end{proof}

Our second induction step is the simplest: We reduce the number of alternatives using dummy candidates that no voter ever approves. This is the only place in the proof where we require the weak efficiency axiom.

\begin{lemma}
	\label{lem:induction-m}
	Fix $n$ and $k$, and let $m \ge k$. If there exists a weakly efficient, proportional, and strategyproof committee rule for $m+1$ alternatives, then there also exists such a rule for $m$ alternatives.
\end{lemma}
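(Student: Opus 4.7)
The plan is to define $f_m$ by padding each input profile with a dummy candidate that no one approves, invoking $f_{m+1}$, and arguing that weak efficiency forces the dummy out of the selected committee. Concretely, let $C_m = \{c_1, \dots, c_m\}$ be the ground set in the smaller setting and let $z$ be an extra candidate, so $C_{m+1} = C_m \cup \{z\}$. Given any profile $P \in \mathcal{B}^N$ over $C_m$ (with $|\bigcup_i P(i)| \ge k$, per the domain restriction declared in Section~\ref{sec:prelims}), form the lifted profile $P^\star$ over $C_{m+1}$ by $P^\star(i) = P(i)$ for all $i$; each $P^\star(i)$ is still a proper nonempty subset of $C_{m+1}$, so $P^\star$ is a legitimate input to $f_{m+1}$. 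Define
\[ f_m(P) = f_{m+1}(P^\star). \]

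The first thing to check is that $f_m(P) \subseteq C_m$, i.e.\ that $z \notin f_{m+1}(P^\star)$: since no voter in $P^\star$ approves $z$, and $|\bigcup_i P^\star(i)| = |\bigcup_i P(i)| \ge k$, weak efficiency of $f_{m+1}$ rules $z$ out. Hence $f_m(P) \in \mathcal{C}_k(C_m)$.

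Next, I would verify each axiom transfers from $f_{m+1}$ to $f_m$. For weak efficiency: if a candidate $c \in C_m$ is approved by no voter in $P$, then the same holds in $P^\star$, so weak efficiency of $f_{m+1}$ gives $c \notin f_{m+1}(P^\star) = f_m(P)$. For proportionality: if $P$ is a party-list profile in which the singleton $\{c\}$ appears at least $n/k$ times, then $P^\star$ is a party-list profile with the same property (the added candidate $z$ is approved by nobody, so it doesn't disturb the party structure), and proportionality of $f_{m+1}$ yields $c \in f_{m+1}(P^\star) = f_m(P)$. For strategyproofness: if $P'$ is an $i$-variant of $P$ with $P'(i) \subsetneq P(i)$ and $f_m(P') \cap P(i) \supsetneq f_m(P) \cap P(i)$, then $P'^\star$ is an $i$-variant of $P^\star$ with $P'^\star(i) \subsetneq P^\star(i)$; since $f_m = f_{m+1}$ on lifted profiles, this chain of inclusions translates verbatim to a successful subset manipulation of $f_{m+1}$, contradicting its strategyproofness.

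There is essentially no obstacle here beyond bookkeeping: the only delicate point is precisely why weak efficiency is needed, namely to guarantee that $z$ is excluded from $f_{m+1}(P^\star)$ so that $f_m$ is well-defined as a committee rule over $C_m$. Without weak efficiency, nothing in the larger rule would prevent $z$ from being selected, and the construction would collapse. The hypothesis $m \ge k$ ensures that the domain restriction $|\bigcup_i P(i)| \ge k$ can actually be satisfied in the smaller setting, and also that $m+1 \ge k+1 > k$ so weak efficiency is meaningfully applicable at $P^\star$.
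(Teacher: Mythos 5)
Your proposal is correct and follows exactly the same route as the paper's proof: view each profile over $C_m$ as a profile over $C_{m+1}$ with an unapproved dummy candidate, use weak efficiency to conclude the dummy is never elected so that $f_m(P) := f_{m+1}(P^\star)$ is a well-defined committee over $C_m$, and observe that the axioms transfer. Your verification of the three axioms is in fact more explicit than the paper's, which dismisses that part as easy to check.
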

\begin{proof}
	Let $f_{m+1}$ be the committee rule defined on the candidate set $C_{m+1} = \{c_1, \dots, c_m, c_{m+1}\}$. Note that every profile $P$ over candidate set $C_m = \{ c_1, \dots, c_m \}$ is also a profile over candidate set $C_{m+1}$. We then just define the committee rule $f_m$ for the candidate set $C_m$ by $f_m(P) := f_{m+1}(P)$ for all profiles $P$ over candidate set $C_m$, where we assume that $|\bigcup_{i\in N} P(i)| \ge k$. By weak efficiency, $f_m(P) \subseteq C_m$, so that $f_m$ is a well-defined rule. It is easy to check that $f_m$ is weakly efficient, proportional, and strategyproof.
\end{proof}

Our last induction step reduces the committee size from $k+1$ to $k$. The construction introduces an additional candidate and an additional voter, and appeals to Lemma~\ref{lem:singleton-approvers} to show that the new candidate is always part of the winning committee. Thus, the larger rule implicitly contains a committee rule for size-$k$ committees.

\begin{lemma}
	\label{lem:induction-k}
	Let $k \ge 2$. If there exists a proportional and strategyproof committee rule for committee size $k+1$, for $k+1$ voters, and for $k+2$ alternatives, then there also exists such a rule for committee size $k$, for $k$ voters, and for $k+1$ alternatives.
\end{lemma}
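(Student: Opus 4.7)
The construction I have in mind adds a dummy voter who approves a fresh dummy candidate, runs $f_{k+1}$, and strips the dummy candidate from the output. Concretely, let $f_{k+1}$ denote the hypothesised rule on candidate set $C^* = \{c_1, \dots, c_{k+1}, c^*\}$ with voter set $N^* = \{1, \dots, k+1\}$ and committee size $k+1$. Given any profile $P \in \mathcal{B}^k$ over $C = \{c_1, \dots, c_{k+1}\}$ with $|\bigcup_i P(i)| \ge k$, form $P^* \in \mathcal{B}^{N^*}$ by copying $P(1), \dots, P(k)$ and letting the new voter $k+1$ submit the singleton ballot $\{c^*\}$. Then define
\[ f_k(P) := f_{k+1}(P^*) \setminus \{c^*\}. \]
Note that $|\bigcup_i P^*(i)| = |\bigcup_i P(i)| + 1 \ge k+1$, so $P^*$ lies in the domain of $f_{k+1}$.

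The first thing I would check is that $c^* \in f_{k+1}(P^*)$ always, so that the output has size exactly $k$. The ambient setting is precisely the one required by Lemma~\ref{lem:singleton-approvers}: the rule $f_{k+1}$ is defined on $k+2 = (k+1)+1$ candidates (matching the ``$m = k+1$'' condition of the lemma when applied with committee size $k+1$), the singleton $\{c^*\}$ appears at least $\frac{n^*}{k+1} = 1$ time in $P^*$, and no other voter approves $c^*$. The lemma therefore gives $c^* \in f_{k+1}(P^*)$, and $f_k$ is a well-defined size-$k$ committee rule.

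For proportionality, suppose $P$ is a party-list profile over $C$ in which some singleton $\{c\}$ appears at least $\frac{n}{k} = 1$ time. Since $c^* \notin C$, the ballot $\{c^*\}$ is disjoint from every ballot in $P$, so $P^*$ remains a party-list profile, and the singleton $\{c\}$ still appears at least $1 = \frac{n^*}{k+1}$ times in $P^*$. Proportionality of $f_{k+1}$ yields $c \in f_{k+1}(P^*)$, and since $c \neq c^*$, we get $c \in f_k(P)$. For strategyproofness, suppose toward contradiction that some voter $i$ reports $P'(i) \subsetneq P(i)$ and achieves $f_k(P') \cap P(i) \supsetneq f_k(P) \cap P(i)$. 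Then $(P')^*$ is an $i$-variant of $P^*$ with $(P')^*(i) \subsetneq P^*(i)$. Because $c^* \notin P(i)$ and $c^*$ lies in both $f_{k+1}(P^*)$ and $f_{k+1}((P')^*)$, intersecting with $P(i) = P^*(i)$ removes $c^*$, lifting the strict inclusion to $f_{k+1}((P')^*) \cap P^*(i) \supsetneq f_{k+1}(P^*) \cap P^*(i)$, contradicting strategyproofness of $f_{k+1}$.

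The step I expect to require the most care is the strategyproofness lift, specifically ensuring that the dummy voter and dummy candidate do not interfere with the manipulator's incentives. The key observation making this painless is that $c^*$ is approved by none of voters $1, \dots, k$ and sits inside every relevant $f_{k+1}$-output by Lemma~\ref{lem:singleton-approvers}, so it contributes nothing to the intersections with $P(i)$ and cancels cleanly on both sides.
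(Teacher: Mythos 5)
Your construction is exactly the paper's: append a dummy voter approving a fresh dummy candidate, invoke Lemma~\ref{lem:singleton-approvers} to show that candidate is always elected, and strip it from the output; the proportionality and strategyproofness lifts are argued the same way (you just spell out the strategyproofness step in more detail than the paper does). The proposal is correct and takes essentially the same approach.
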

\begin{proof}
	Let $f_{k+1}$ be the committee rule assumed to exist, defined on the candidate set $C_{k+2} = \{ c_1, \dots, c_{k+2} \}$. We define the rule $f_k$ for committee size $k$ on candidate set $C_{k+1} = \{c_1, \dots, c_{k+1}\}$ as follows:
	\[ f_k(A_1, \dots, A_k) = f_{k+1}(A_1, \dots, A_k, \{ c_{k+2} \}) \setminus \{ c_{k+2} \}, \]
	for every profile $P = (A_1, \dots, A_k)$ over $C_{k+1}$. Notice that this is well-defined and returns a committee of size $k$, since by Lemma~\ref{lem:singleton-approvers} applied to $f_{k+1}$, we always have $c_{k+2} \in f_{k+1}(A_1, \dots, A_k, \{ c_{k+2} \})$.
	
	\emph{Proportionality.} Let $P = (A_1, \dots, A_k)$ be a party-list profile over $C_{k+1}$, in which the ballot $\{c\}$ occurs at least $\frac nk = \frac kk = 1$ time. Then $P' = (A_1, \dots, A_k, \{ c_{k+2} \})$ is a party-list profile, in which $\{c\}$ occurs at least $\frac {n+1}{k+1} = \frac{k+1}{k+1} = 1$ time; thus, by proportionality of $f_{k+1}$, we have $c\in f_{k+1}(P') = f_k(P)$.
	
	\emph{Strategyproofness.} If there is a successful manipulation from $P$ to $P'$ for $f_k$, then there is a successful manipulation from $(P, \{c_{k+2}\})$ to $(P', \{c_{k+2}\})$ for $f_{k+1}$, contradiction.
\end{proof}

Finally, we can combine all three induction steps, applying them in order, and the base case, to get our main result.

\begin{proof}[Proof of the Main Theorem.]
	Let $k \ge 3$, let $n$ be divisible by $k$, and let $m \ge k+1$. Suppose for a contradiction that there does exist an approval-based committee rule $f$ which satisfies weak efficiency, proportionality, and strategyproofness for these parameters.
	
	By Lemma~\ref{lem:induction-m} applied repeatedly to $f$, there also exists such a rule $f'$ for $k+1$ alternatives. By Lemma~\ref{lem:induction-n} applied to $f'$, there exists a proportional and strategyproof rule $f''$ for $k$ voters. By Lemma~\ref{lem:induction-k} applied to $f''$, there must exist a proportional and strategyproof rule for committee size $3$, for $3$ voters, and for $4$ alternatives. But this contradicts Proposition~\ref{lem:base-case}.
\end{proof}

\subsection{Extension to other electorate sizes}
\label{sec:droop}

One drawback of Theorem~\ref{thm:main} is the condition on the number of voters $n$. For larger values of $k$, practical elections are unlikely to have a number of voters which is exactly a multiple of $k$. The impossibility as we have proved it does not rule out that for other values of $n$, there does exist a proportional and strategyproof rule. Indeed, at least for small parameter values, the SAT solver confirms that this is the case. An important open question is whether, for fixed $k\ge 3$, the impossibility holds for all sufficiently large $n$.

In this section, we give one result to this effect, obtained by strengthening the proportionality axiom. Note that all the axioms we discussed in Section~\ref{sec:prop-axioms} are based on the intuition that a group of $\frac nk$ voters should be represented by one committee member. The value ``$\frac nk$'' is known as the \emph{Hare quota}. An alternative proposal is the \emph{Droop quota}, according to which every group consisting of strictly more than $\frac n {k+1}$ voters should be represented by one committee member. Thus, with Droop quotas, slightly smaller groups already need to be represented. The strengthened axiom is as follows.

\begin{description}
	\item[Droop Proportionality] Suppose $P$ is any profile, and some singleton ballot $\{c\} \in \mathcal B$ appears strictly more than $\frac n {k+1}$ times in $P$. Then $c \in f(P)$.
\end{description}
Note that Droop proportionality applies to all profiles and not only party-list profiles.
With this stronger proportionality axiom, we can show that for fixed $k$ and \emph{all} sufficiently large $n$, we have an incompatibility with strategyproofness.

\begin{proposition}
	\label{prop:droop}
	Let $k \ge 3$, let $m \ge k+1$, and let $n \ge k^2$. Then there is no approval-based committee rule satisfying weak efficiency, strategyproofness, and Droop proportionality.
\end{proposition}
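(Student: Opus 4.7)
The plan is to reduce to the machinery underlying Theorem~\ref{thm:main}: from any hypothesised $f$ I will construct a proportional and strategyproof rule $g$ on $k$ voters and $k+1$ alternatives, whose existence is already ruled out by iterating Lemma~\ref{lem:induction-k} and closing with Lemma~\ref{lem:base-case} (the downstream chain invokes only proportionality and strategyproofness, not weak efficiency). Reducing the number of alternatives from $m$ to $k+1$ is immediate via repeated application of Lemma~\ref{lem:induction-m}: Droop proportionality is preserved by the restriction to $C_m$, since the threshold $n/(k+1)$ does not depend on $m$. So I assume $m = k+1$ in what follows.

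Write $n = qk + r$ with $0 \le r < k$; since $n \ge k^2$, we have $q \ge k$, and in particular $q > r$. Fix any ballot $B \in \mathcal B$ once and for all, and for each profile $P \in \mathcal B^k$ in the domain, form $P^\ast \in \mathcal B^n$ by concatenating $q$ copies of $P$ with $r$ voters whose ballot is $B$; define $g(P) := f(P^\ast)$. For proportionality of $g$: if $P$ is a party-list profile in which the singleton $\{c\}$ occurs at least once, it occurs at least $q$ times in $P^\ast$, and the inequality $q > r$ rearranges to $q(k+1) > qk + r = n$, i.e.\ $q > n/(k+1)$. Droop proportionality of $f$ then forces $c \in f(P^\ast) = g(P)$.

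For strategyproofness of $g$, suppose for contradiction that voter $i$ drops from $A_i$ to $A_i' \subsetneq A_i$ producing $P'$, with $g(P') \cap A_i \supsetneq g(P) \cap A_i$. Build a chain $P^\ast = Q_0, Q_1, \dots, Q_q = (P')^\ast$ in which $Q_t$ is obtained from $Q_{t-1}$ by changing the $t$th copy of voter $i$ from $A_i$ to $A_i'$, with the $r$ dummy voters untouched. Each transition is a legal single-voter subset drop, so strategyproofness of $f$ forbids $f(Q_t) \cap A_i \supsetneq f(Q_{t-1}) \cap A_i$. Since $m = k+1$, any two distinct $k$-committees differ by exactly one swap $W' = W \setminus \{d\} \cup \{c\}$; a short case analysis on whether $c, d$ lie in $A_i$ shows that the only swap producing a strict increase in the intersection with $A_i$ is precisely the one SP forbids, and hence the cardinality $|f(Q_t) \cap A_i|$ is non-increasing in $t$. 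But by assumption $|f(Q_q) \cap A_i| > |f(Q_0) \cap A_i|$, a contradiction. Finally, applying Lemma~\ref{lem:induction-k} to $g$ a total of $k-3$ times yields a proportional and strategyproof rule for committee size $3$ on $3$ voters and $4$ alternatives, contradicting Lemma~\ref{lem:base-case}.

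The main obstacle is the strategyproofness transfer, which hinges on both the preliminary reduction to $m = k+1$ (so that SP at each step of the chain translates into a bound on cardinalities) and on dummy ballots that are independent of $P$ (so that each step is a genuine single-voter variant rather than a joint move). The hypothesis $n \ge k^2$ enters precisely to secure $q \ge k > r$, which is exactly what turns a singleton party supported by one voter in $P$ into a group large enough to trigger Droop proportionality in $P^\ast$.
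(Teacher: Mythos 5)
Your proof is correct and follows essentially the same route as the paper: reduce to $m=k+1$ via Lemma~\ref{lem:induction-m}, write $n=qk+r$, pad with $r$ fixed dummy ballots so that the relevant singleton groups of size $q$ strictly exceed the Droop quota $n/(k+1)$ (using $r<k\le q$), and transfer strategyproofness. The only difference is cosmetic: you fold the voter-copying chain of Lemma~\ref{lem:induction-n} directly into the reduction to obtain a $k$-voter rule, whereas the paper stops at a $qk$-voter Hare-proportional rule and then invokes Theorem~\ref{thm:main} as a black box.
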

\begin{proof}
	Suppose such a rule $f_n$ exists. By Lemma~\ref{lem:induction-m} (suitably reproved to apply to the Droop quota), there also is such a rule for $m = k+1$ alternatives, so we may assume that $m = k+1$.
	
	Write $n = q\cdot k + r$ for some $0 \le r < k$ and some $q \ge k$. We will show that there exists a committee rule for $q\cdot k$ voters which satisfies proportionality (with respect to the \emph{Hare} quota) and strategyproofness, which contradicts Theorem~\ref{thm:main}.
	
	Fix $r$ arbitrary ballots $B_1, \dots, B_r$. We define a committee rule $f_{qk}$ on $q\cdot k$ voters, $m$ alternatives, and for committee size $k$, as follows:
	\[ f_{qk}(A_1, \dots, A_{qk}) = f_{n}(A_1, \dots, A_{qk}, B_1, \dots, B_r), \]
	for all profiles $P = (A_1, \dots, A_{qk}) \in \smash{\mathcal B^{qk}}$.
	
	It is clear that $f_{qk}$ inherits strategyproofness from $f_n$: Any successful manipulation of $f_{qk}$ is also successful for $f_n$.
	
	We are left to show that $f_{qk}$ satisfies (Hare) proportionality. So suppose that $P = (A_1, \dots, A_{qk}) \in \smash{\mathcal B^{qk}}$ is a party-list profile in which singleton party $\{c\}$ is approved by at least $\smash{\frac{qk}{k}} = q$ voters. 
	Note that, because $r < k \le q$,
	\[ \frac{n}{k+1} = \frac{qk+r}{k+1} < \frac{qk + q}{k+1} = \frac{q(k+1)}{k+1}  = q, \]
	Thus, in the profile $P' = (A_1, \dots, A_{qk}, B_1, \dots, B_r)$, there are strictly more than $\frac{n}{k+1}$ voters who approve $\{c\}$. Thus, by Droop proportionality, $c \in f_n(P') = f_{qk}(P)$. Thus, $f_{qk}$ is (Hare) proportional.
\end{proof}

\paragraph{Remark.}
If we want to restrict the Droop proportionality axiom to only apply to party-list profiles, we can instead assume in Proposition~\ref{prop:droop} that $m \ge k+2$, and then let $B_1 = \cdots = B_r = \{c_{k+2}\}$, defining the rule $f_{qk}$ only over the first $k+1$ alternatives. Then the final profile $P'$ is a party-list profile.

\subsection{Small committees}
\label{sec:k-2}
Theorem~\ref{thm:main} only applies to the case where $k \ge 3$.
For the case $k = 1$, where we elect just a single winner, Approval Voting with lexicographic tie-breaking is both proportional and strategyproof.%
\footnote{It is well-known that AV is strategyproof. Proportionality for $k=1$ is equivalent to a unanimity condition, since $\smash{\frac n k} = n$, and AV satisfies unanimity.}
This leaves open the case of $k = 2$.

The SAT solver indicates that the statement of Theorem~\ref{thm:main} does not hold for $k = 2$, and that there exists a proportional and strategyproof rule, at least for small parameter values. However, we can recover an impossibility by strengthening strategyproofness to superset-strategyproofness, i.e., by allowing manipulators to report arbitrary ballots (rather than only subsets of the truthful ballot).

\begin{theorem}
	\label{thm:k-2}
	Let $k = 2$, $m \ge 4$, and let $n$ be even. Then there is no approval-based committee rule that satisfies weak efficiency, JR on party lists and superset-strategyproofness.
\end{theorem}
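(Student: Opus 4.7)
The plan is to reuse the computer-aided recipe behind Theorem~\ref{thm:main}: build a small base case by SAT encoding plus MUS extraction, and then extend to arbitrary $(k{=}2,\,m\ge 4,\,n\text{ even})$ by two induction steps. Algorithm~\ref{alg:one} adapts with two local changes: the set \emph{allowed}$[P]$ for a party-list profile $P$ is now filtered by the full JR on party lists condition --- any ballot appearing at least $n/k$ times in $P$ must intersect the chosen committee, not just singleton ballots --- and the inner manipulation loop ranges over \emph{all} $i$-variants of $P$, not only subset variants, reflecting the move from subset- to superset-strategyproofness. Weak efficiency is enforced by refusing to introduce variables $x_{P,W}$ whose $W$ contains a candidate unapproved in $P$.

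For the base case I would target $k=2$, $m=4$, and the smallest even $n$ making the encoded formula unsatisfiable (expected to be very small, plausibly $n=2$ or $n=4$). An MUS extractor such as \textsc{MUSer2} then yields a short human-readable chain of deductions in the style of Lemma~\ref{lem:base-case}: at each link, either JR on party lists forces a candidate into the committee at a party-list profile, or superset-strategyproofness rules out a candidate committee at an $i$-variant of a neighbouring profile. Relabelling-based symmetry-breaking at one seed profile helps keep the MUS manageable.

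The $m$-induction, stepping down from $m+1$ to $m$ until we reach $m=4$, goes through by Lemma~\ref{lem:induction-m} essentially verbatim: its proof uses only weak efficiency plus the fact that party-list profiles and superset-manipulations over a smaller candidate set embed unchanged into the larger one. The $n$-induction from $n=2q$ to the base case is by voter-copying, $f_2(P):=f_{2q}(qP)$; JR on party lists passes through by homogeneity, since a ballot appearing at least once in $P$ appears $q$ times in $qP$, satisfying the threshold $2q/2=q$.

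The main obstacle is transferring superset-strategyproofness through copying. The elegant single-swap argument of Lemma~\ref{lem:induction-n} is not available here, because it exploited $m=k+1$ to force consecutive intermediate committees in the copy chain to differ by only one candidate; at $m\ge 4$ a net superset gain $W_0\cap P(i)\subsetneq W_q\cap P(i)$ along the $q$ single-ballot changes from $qP$ to $qP'$ can in principle be masked by intermediate steps that simultaneously add and remove approved candidates. My plan is to exploit that at $k=2$ the intersection $W_j\cap P(i)$ has size at most $2$, so only a small number of intersection patterns arise along the $q$-step chain; combining forward applications of superset-strategyproofness (a copy switching from $P(i)$ to $P'(i)$) with backward applications (a copy switching back from $P'(i)$ to $P(i)$) across the many identical copies should pinpoint one step violating superset-strategyproofness of $f_{2q}$. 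Should this case analysis prove too intricate, a fallback is to certify the base case by SAT for several small even $n$ and apply copy-induction only to cover their integer multiples.
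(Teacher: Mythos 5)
Your base-case plan coincides with what the paper actually does: the proof of Theorem~\ref{thm:k-2} is computer-generated, starts from the party-list profile $(ab,ab,cd,cd)$ with $n=4$, $m=4$, breaks symmetry so that $f(P)=ac$, and then chains strategyproofness deductions to a contradiction (the paper omits the long case analysis). Your adaptation of Algorithm~\ref{alg:one} (full JR-on-party-lists filter, all $i$-variants, weak efficiency baked into the allowed committees) and your $m$-induction via Lemma~\ref{lem:induction-m} are both sound.

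The genuine gap is exactly where you locate it, and your proposed repair does not close it. The proof of Lemma~\ref{lem:induction-n} needs $m=k+1$ precisely so that any two committees differ in one candidate, which makes every step of the copy chain either neutral or an outright superset violation. For $k=2$, $m=4$ this completeness fails, and forward applications of superset-strategyproofness alone are insufficient: take $P(i)=\{a,b,c\}$ over $C=\{a,b,c,d\}$ and a chain of committees $W_0=ad$, $W_1=cd$, $W_2=ab$, so the intersections with $P(i)$ evolve as $\{a\}\to\{c\}\to\{a,b\}$; no single step is a strict superset gain, yet $W_2\cap P(i)\supsetneq W_0\cap P(i)$. Whether the backward applications (with true ballot $P'(i)$) rule out all such chains depends on the particular pair of ballots and is not established by your sketch --- you would need to carry out that case analysis, and it is not obviously finite-state in a useful way since the intermediate profiles are mixed and no other axiom constrains them. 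Your fallback does not escape this, because ``certify small even $n$ and cover their multiples by copy-induction'' invokes the very induction step that is in doubt (and a finite set of base cases only covers its multiples, not all even $n$, unless $n=2$ itself is a base case). Note also that you cannot first reduce $m$ to $3$ and then copy voters, because the statement only asserts impossibility for $m\ge 4$; so either the $n$-induction must be made to work at $m=4$, or the reduction must take a different form (e.g.\ padding with dummy voters attached to a dummy candidate, with the quota arithmetic redone), or the base case must be verified for each even $n$ separately. As it stands, the proposal proves the theorem only for those $(n,m)$ reachable from a verified base case by the $m$-induction.
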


The proof of this result was also obtained via the computer-aided method. However, this proof is long and involves many case distinctions, so we omit the details. The proof begins with the starting profile $P = (ab,ab,cd,cd)$. By JR on party lists, we have $f(P) \in \{ac,ad,bc,bd\}$. By relabeling alternatives, we may assume that $f(P) = ac$. The proof then applies strategyproofness to deduce the values of $f$ at other profiles, and arrives at a contradiction.

Theorem~\ref{thm:k-2} requires an even number of voters. This is necessary, since for $k=2$ and odd numbers of voters, AV satisfies both axioms.

\begin{proposition}
	\label{prop:av-odd}
	For $k = 2$, any $m \ge 3$, and $n$ odd, AV satisfies proportionality (it even satisfies JR) and is cardinality-strategyproof.
\end{proposition}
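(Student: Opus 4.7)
The plan is to prove the two properties separately, starting with the proportionality claim (in fact the stronger JR). The crucial observation is that since $n$ is odd and $k=2$, any group $N'$ with $|N'|\geq n/2$ automatically satisfies $|N'|\geq (n+1)/2 > n/2$, i.e.\ is a strict majority. If every voter in $N'$ approves a common candidate $c$, then $c$'s approval score is at least $|N'|$, while any candidate outside $\bigcup_{i\in N'} P(i)$ has score at most $n-|N'|<|N'|$. Since AV returns the top two candidates by score, it cannot omit $c$ in favour of two lower-scoring candidates that are both outside $\bigcup_{i\in N'} P(i)$; hence $f(P)\cap\bigcup_{i\in N'} P(i)\neq\emptyset$, establishing JR. Proportionality is the special case $N'=\{i:P(i)=\{c\}\}$ on party-list profiles.

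For cardinality-strategyproofness, I fix a truthful profile $P$ and an $i$-variant $P'$, and write $s_x$, $s'_x$ for the approval scores of $x$ in $P$, $P'$. The single-voter change yields the monotonicity $s'_x\leq s_x$ for $x\in P(i)$ and $s'_x\geq s_x$ for $x\notin P(i)$. From this I extract an order-preservation lemma: if $c\in P(i)$ and $e\notin P(i)$ and $e$ ranks above $c$ in the AV ordering under $P$ (strictly higher score, or tied with $e$ lex-earlier), then $e$ still ranks above $c$ under $P'$. The chain $s'_e\geq s_e\geq s_c\geq s'_c$ handles the strict-inequality case directly; the only subtle point is the tie case, since equality $s'_e=s'_c$ collapses the chain to equalities throughout, forcing $s'_e=s_e$ and $s'_c=s_c$, so the lex order is inherited unchanged and still places $e$ ahead of $c$.

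To finish, suppose for contradiction that $|f(P')\cap P(i)|>|f(P)\cap P(i)|$. Since $|f(P)|=|f(P')|=k$, a short bookkeeping argument forces the existence of both an approved `entrant' $c\in (f(P')\setminus f(P))\cap P(i)$ and a non-approved `leaver' $e\in (f(P)\setminus f(P'))\setminus P(i)$. But $e\in f(P)$ and $c\notin f(P)$ means $e$ ranks above $c$ under $P$, so by the order-preservation lemma $e$ still ranks above $c$ under $P'$, contradicting $c\in f(P')$ and $e\notin f(P')$. The main obstacle is the careful handling of lex tie-breaking in the order-preservation step; once that is isolated, both halves of the proposition reduce to compact counting arguments, with the oddness of $n$ doing all the work on the proportionality side.
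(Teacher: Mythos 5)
Your proof is correct, but it diverges from the paper's in both halves, in instructive ways. For JR, the paper (following Aziz et al.) argues via the committee: the top approval score is $>n/2$, so some winner $d$ is approved by a strict majority, and since two strict majorities intersect, some voter of $N'$ approves $d$. You instead argue via the losers: every candidate outside $\bigcup_{i\in N'}P(i)$ has score at most $n-|N'|<|N'|\le s_c$, so AV cannot fill both seats from outside that union while passing over $c$. Both arguments are two lines long and both use oddness only to turn $|N'|\ge n/2$ into a strict majority; yours is slightly more self-contained (no appeal to the intersecting-majorities fact), while the paper's generalises more gracefully to the way JR is usually proved for AV with $k=2$. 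For cardinality-strategyproofness the paper simply asserts the property (``one can check \dots it is neither advantageous to increase the approval score of a non-approved candidate, nor to decrease the approval score of an approved candidate''), whereas you supply the full argument: the score-monotonicity under an $i$-variant, the order-preservation lemma including the lexicographic tie case (correctly noting that equality collapses the chain and leaves the lex order intact), and the counting step producing an approved entrant and a non-approved leaver. That level of detail is a genuine addition rather than a restatement, and your argument notably does not use $k=2$ at all, so it establishes cardinality-strategyproofness of AV for arbitrary committee size, consistent with the paper's earlier unproved claim.
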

\begin{proof}
	AV is cardinality-strategyproof. \citet[Thm.~3]{ejr} showed that for $k = 2$ and odd $n$, $\textup{AV}$ satisfies JR. For completeness, we repeat their argument here. Let $P$ be a profile. Suppose there is some group $N' \subseteq N$ with $|N'| \ge \frac nk$ with $c\in P(i)$ for all $i\in N'$. Note that $|N'| \ge \frac nk$ implies $|N'| > \frac n2$, so that $c$ has approval score $> \frac n2$. Then the highest approval score is also $> \frac n2$, and so there is some $d\in \textup{AV}(P)$ with approval score $> \frac n2$. Thus, a strict majority of voters approve $d$. Since strict majorities intersect, there must be a voter $i\in N'$ who approves $d$. Thus $d \in \textup{AV}(P) \cap \bigcup_{i\in N'} P(i)$, whence the latter set is non-empty, and JR is satisfied.
\end{proof}

\section{Related Work}

The closest work to ours is a short article by \citet{duddy}, who also proves an impossibility about approval-based committee rules involving a proportionality axiom. Duddy's result is about \emph{probabilistic} committee rules, which return probability distributions over the set of committees. Because any deterministic committee rule induces a probabilistic one (which puts probability 1 on the deterministic output), Duddy's probabilistic result also has implications for deterministic rules, which we can state as follows.

\begin{theorem}[Duddy \citep{duddy}]
	For $m=3$ and $k=2$, no approval-based committee rule $f$ satisfies the following three axioms.
	\begin{enumerate}
		\item (Representative.) There exists a profile $P$ in which $n$ voters approve $\{x\}$ and $n+1$ voters approve $\{y,z\}$, but $f(P) \neq \{y,z\}$, for some $n \in \mathbb N$ and all distinct $x,y,z\in C$.
		\item (Pareto-consistent.) If in profile $P$, the set of voters who approve of $x$ is a strict subset of the set of voters who approve of $y$, then $f(P) \neq \{x,z\}$, for all distinct $x,y,z\in C$.
		\item (Strategyproof.) Suppose profiles $P$ and $P'$ are identical, except that voter $i$ approves $\{x,y\}$ in $P$ but $\{x\}$ in $P'$. If $f(P) \neq \{x,y\}$, then also $f(P') \neq \{x,y\}$.
	\end{enumerate}
\end{theorem}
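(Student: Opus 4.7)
The plan is to argue by contradiction: assume some $f$ satisfies Representative, Pareto-consistent, and Strategyproof. Applying Representative to each permutation of the three candidates $a,b,c$, we obtain some $n$ such that $f(P^{x|yz}) \in \{\{x,y\},\{x,z\}\}$ for every triple, where $P^{x|yz}$ denotes the profile in which $n$ voters report $\{x\}$ and $n+1$ report $\{y,z\}$. Relabeling the two paired candidates, we may assume $f(P^{a|bc}) = \{a,b\}$.

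Starting from $P^{a|bc}$, I would use strategyproofness to propagate values to nearby profiles. The contrapositive form of the axiom states that when a voter adds a candidate $y$ to an $\{x\}$-ballot, the committee $\{x,y\}$ remains selected whenever it was already selected; applied with $(x,y)=(a,b)$ to each of the $n$ voters with ballot $\{a\}$, this yields $f((\{a,b\}^n,\{b,c\}^{n+1}))=\{a,b\}$. The forward form says that dropping $y$ from $\{x,y\}$ cannot produce the winner $\{x,y\}$ where none existed; applying it with $(x,y)=(b,c)$ to the $n+1$ voters with ballot $\{b,c\}$ in turn forbids the outcome $\{b,c\}$ at every intermediate profile, while Pareto-consistency (invoked once $c$-approvers become a strict subset of $b$-approvers) forbids $\{a,c\}$. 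The sole remaining option is $\{a,b\}$, and in particular $f((\{a\}^n,\{b\}^{n+1}))=\{a,b\}$; the symmetric chain dropping $b$ from $\{b,c\}$-ballots gives $f((\{a\}^n,\{c\}^{n+1}))=\{a,c\}$.

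The third step is a case analysis on the Representative-constrained values $f(P^{b|ac}) \in \{\{a,b\},\{b,c\}\}$ and $f(P^{c|ab}) \in \{\{a,c\},\{b,c\}\}$, giving four cases. In each case I would run analogous strategyproofness chains from the newly pinned-down profile, combining them with Pareto to constrain $f$ at further profiles. The contradiction arises by routing two independent chains to a common target profile where their outcomes are incompatible. For example, when $f(P^{b|ac})=\{b,c\}$, one chain through $(\{b,c\}^n,\{a,c\}^{n+1})$ and onward forces a specific committee, while a chain starting from our pinned value $f(P^{a|bc})=\{a,b\}$ forces the opposite outcome at the same profile, contradicting strategyproofness.

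The main obstacle is the granularity of the strategyproofness axiom: each single-voter manipulation pins down at most one binary fact (whether the committee $\{x,y\}$ is the winner), so progress accumulates only through long chains, and Pareto-consistency is needed at the intermediate profiles whose approval structure happens to produce a strictly dominated candidate. Finding good chains by hand for each of the four sub-cases is intricate, analogous to the seven-step cascade of Lemma~\ref{lem:base-case}; Duddy's original proof organises the analysis symmetrically across $a,b,c$, and the deterministic specialisation follows the same pattern.
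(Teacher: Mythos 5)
This statement is not proved in the paper at all: it is Duddy's theorem, reproduced in the Related Work section and attributed to \citet{duddy}, where it is established for \emph{probabilistic} committee rules and the deterministic version stated here follows by specialisation. There is therefore no in-paper proof to compare your attempt against, and your sketch has to stand entirely on its own.

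Judged on its own terms, the opening of your sketch is sound and mirrors the symmetry-breaking of Lemma~\ref{lem:base-case}: Representative forces $f(P^{a|bc})\in\{\{a,b\},\{a,c\}\}$, and since the three axioms are invariant under permuting candidates you may indeed assume $f(P^{a|bc})=\{a,b\}$. But there are two genuine gaps. First, the leg from $(\{a,b\}^n,\{b\}^{n+1})$ to $(\{a\}^n,\{b\}^{n+1})$ does not follow: when a voter drops $b$ from an $\{a,b\}$ ballot at a profile where $f=\{a,b\}$, the strategyproofness axiom is silent (its hypothesis $f(P)\neq\{x,y\}$ fails, and its contrapositive runs in the opposite direction), so the detour through $(\{a,b\}^n,\{b,c\}^{n+1})$ is a dead end. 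The conclusion is recoverable --- run the $c$-dropping chain directly on $P^{a|bc}$ itself, where strategyproofness excludes $\{b,c\}$ and Pareto-consistency excludes $\{a,c\}$ at every intermediate profile --- but the route as written is broken. Second, and more seriously, the contradiction is asserted rather than derived: the four-case analysis on $f(P^{b|ac})$ and $f(P^{c|ab})$ and the claim that two chains collide at a common profile with incompatible outcomes is exactly the content of the theorem, and nothing in the sketch verifies that such colliding chains exist. You also gloss over a quantifier issue: Representative only guarantees, for each labelling of $\{x,y,z\}$, \emph{some} $n$ and \emph{some} profile with the stated anonymous structure, while $f$ is assumed neither anonymous nor neutral, so you must check that the three invocations can be taken with a common electorate and that your chains pass through the same concrete profiles rather than merely profiles of the same type.
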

How does Duddy's theorem relate to ours? Duddy's strategyproofness is weaker than but very similar to our strategyproofness. Our result does not require an efficiency axiom. Duddy's representative axiom is noticeably different from the proportionality axioms that we have discussed. Logically it is incomparable to our proportionality axiom; in spirit it may be slightly stronger. Note that not even the strongest of the proportionality axioms that we have discussed (i.e., EJR) imply Duddy's representativeness. It is also worth noting that Duddy's result works for smaller values of $m$ and $k$ than our result, suggesting that Duddy's axioms are stronger overall. 

In computational social choice, there has been much recent interest in axiomatic questions in committee rules. Working in the context of strict orders, \citet{elk-fal-sko-sli:c:multiwinner-rules} introduced several axioms and studied which committee rules satisfy them. 
\citet{corr_SkowronFS16} axiomatically characterise the class of \emph{committee scoring rules}, and \citet{fal-sko-sli-tal:c:classification} study the finer structure of this class.
For the approval-based setting, \citet{lac-sko:t:abc-approval-multiwinner} characterise \emph{committee counting rules}, and give characterisations of PAV and of Chamberlin--Courant. They also have a result suggesting that AV is the only consistent committee rule which is strategyproof.

From a computational complexity perspective, there have been several papers studying the complexity of manipulative attacks on multiwinner elections \citep{meir2008complexity,obraztsova2013manipulation,faliszewski2017bribery,aziz2015computationalaspects,baumeister2015winner}. 
Other work has studied the complexity of evaluating various committee rules. Notably, it is NP-complete to find a winning committee for PAV \citep{aziz2015computationalaspects,skowron2015achieving}.

\section{Conclusions and Future Work}

We have proved an impossibility about approval-based committee rules. The versions of the proportionality and strategyproofness axioms we used are very weak. It seems unlikely that, by weakening the axioms used, one can find a committee rule that exhibits satisfying versions of these requirements.  A technical question which remains open is whether our impossibility holds for \emph{all} numbers $n$ of voters, no matter whether it is a multiple of $k$ (see Section~\ref{sec:droop}). It would also be interesting to study irresolute or probabilistic rules.

To circumvent the classic impossibilities of Arrow and Gibbard--Satterthwaite, it has proved very successful to study restricted domains such as single-peaked preferences, which can often give rise to strategyproof voting rules \citep{Moul88a,ELP17a}.
\citet{elkind2015structure} propose analogues of single-peaked and single-crossing preferences for the case of approval ballots and dichotomous preferences. For example, a profile of approval ballots satisfies the Candidate Interval (CI) condition if there exists an underlying linear ordering of the candidates such that each voter approves an \emph{interval} of candidates \citep[see also][]{FHHR11a}.
Restricting the domain to CI profiles in our SAT encoding suggests that an impossibility of the type we have studied cannot be proven for this domain -- at least for small values of $n$, $m$, and $k$.
Finding a proportional committee rule that is not manipulable on the CI domain would be an exciting avenue for future work.

It would be interesting to obtain impossibilities  using other axioms.
Recently, \citet{sanchez2017monotonicity} found some incompatibilities between proportionality and \emph{monotonicity}. Their version of proportionality (`perfect representation'), however, is very strong and possibly undesirable. It would be interesting to see whether such results hold for weaker versions of their axioms.

\paragraph{Acknowledgements.}
I thank Markus Brill for suggesting subset manipulations and other ideas, Martin Lackner and Piotr Skowron for useful discussions, and reviewers from AAMAS and COMSOC for helpful comments. This work was supported by EPSRC and ERC grant 639945 (ACCORD).


\bibliographystyle{ACM-Reference-Format}  


\end{document}